\newtheorem{thm}{Theorem}
\newtheorem{defn}[thm]{Definition}
\newtheorem{lem}[thm]{Lemma}
\newtheorem{cor}[thm]{Corollary}
\newdimen\algorithmicindent \algorithmicindent=0.5cm
\def\eps{\varepsilon}
\newif\ifnotesw\noteswtrue
\newif\ifnotesw\noteswtrue
\def\blfootnote{\xdef\@thefnmark{}\@footnotetext}
\begin{document}

\title{Network installation and recovery: approximation lower bounds and faster exact formulations}
\date{}

\author{Alexander Gutfraind\inst{1} \and
Jeremy Kun\inst{2} \and
\'Ad\'am D. Lelkes\inst{2} \and
Lev Reyzin\inst{2}}  

\institute{
Division of Epidemiology and Biostatistics\\
\and
Department of Mathematics, Statistics, and Computer Science\\
University of Illinois at Chicago\\
\texttt{\{agutfrai,jkun2,alelke2,lreyzin\}@uic.edu}
}

\maketitle

\begin{abstract}
We study the Neighbor Aided Network Installation Problem (NANIP) introduced
previously which asks for a minimal cost ordering of the vertices of
a graph, where the cost of visiting a node is a function of the number of
neighbors that have already been visited. This problem has applications in
resource management and disaster recovery. In this paper we analyze the
computational hardness of NANIP. In particular we show that this problem is
NP-hard even when restricted to convex decreasing cost functions, give a
linear approximation lower bound for the greedy algorithm, and prove a  
general sub-constant approximation lower bound. Then we give a new
integer programming formulation of NANIP and empirically observe its speedup
over the original integer program.

\noindent
\textbf{Keywords}: Infrastructure Network; Disaster Recovery; Permutation
Optimization; Neighbor Aided Network Installation Problem.  
\end{abstract}

\section{Introduction}
\setcounter{footnote}{3} 

We motivate our study with an example from infrastructure networks.  It is
well known that many vital infrastructure systems can be represented as
networks, including transport, communication and power networks.  Large parts
of these networks can be severely damaged in the event of a natural disaster.
When faced with large-scale damage, authorities must develop a plan for
restoring the networks. A particularly challenging aspect of the recovery is
the lack of infrastructure, such as roads or power, necessary to support the
recovery operations.  For example, to clear and rebuild roads, equipment must
be brought in, but many of the access roads are themselves blocked and damaged.
Abstractly, as the recovery progresses, previously recovered nodes provide
resources that help reduce the cost of rebuilding their neighbors. We call this
phenomenon ``neighbor aid''.

Recently, \cite{Gutfraind14} introduced and analyzed a simple
model of neighbor aided recovery in terms of a convex discrete optimization
problem called the \emph{Neighbor Aided Network Installation Problem} (NANIP).
We will henceforth use the terms ``recover'' and ``install'' interchangeably.
For simplicity, we assume that during the recovery of a network all of its
nodes and edges must be visited and restored. They asked how to optimize the
recovery schedule in order to minimize the total cost?  This is also the
question we address herein.

In the NANIP model, the cost of recovering a node depends only on the number of
its already recovered neighbors, capturing the intuition that neighbor aid is
the determining factor of the cost of rebuilding a new node.  NANIP offers a
stylized model for disaster recovery of networks (among other applications) but
the interest in disaster recovery of networks is not new.  A partial list of
existing studies include
\cite{Guha99,nurre2010restoring,Lee07,Adibi94,Bertoli02,coffrin2011strategic}.
A common framework is to consider infrastructure systems as a set of
interdependent network flows, and formulate the problem of minimizing the cost
of repairing such damaged networks.  Another class of models
\cite{Hentenryck10} develops a stochastic optimization problem for stockpiling
resources and then distributing them following a disaster.  More abstract
problems related to NANIP are the single processor scheduling problem
\cite{Karp61}, the linear ordering problem \cite{Mitchell96}, and the study of
tournaments in graph theory \cite{West01}.  

NANIP assumes that certain tasks are dependent and cannot be performed in
parallel, but unlike many scheduling problems, there are no partial order
constraints.  Similarly to traveling salesman problem (TSP)
\cite{schrijver2005history}, the NANIP problem also asks for an optimal
permutation of the vertices of the graph but, unlike in the case of the
traveling salesman problem, the cost associated with visiting a given node
could depend on \emph{all} of the nodes visited before the given node. Another
key difference between NANIP and TSP is that in NANIP it is allowed to visit
nodes that are not neighbors of any previously-visited nodes. As we will see,
such disconnected traversals provide $\Omega(\log(n))$ multiplicative
improvements over connected ones.

Since neighbor aid is assumed to reduce the cost of recovery, we are mainly
interested in decreasing cost functions. Furthermore, since convexity for
decreasing functions captures the ``law of diminishing returns'', i.e. that as
the number of recovered neighbors increases, the per-node value of the aid
provided by one neighbor decreases, convex decreasing functions are of special
interest. Although \cite{Gutfraind14}  gave NP-hardness of
NANIP for general cost via a straightforward reduction from Maximum Independent
Set, the cost function used there was increasing, thus leaving the complexity
of the convex decreasing case an open question.  In this paper we show this
problem is NP-hard as well.  We also provide a new convex integer programming
formulation and analyze the performance of the greedy algorithm, showing that
its worst case approximation ratio is $\Theta(n)$.

\section{Preliminaries}
An instance of NANIP is specified by an undirected graph $G=(V,E)$ and a
real-valued function $f: \mathbb{N} \to \mathbb{R}_{\geq 0}$. The function $f$
represents the cost of installing a vertex $v$, where the argument is the
number of neighbors of $v$ that have already been installed. Hence, the domain
of $f$ is the non-negative integers, bounded by the maximum degree of $G$ (for
terminology see \cite{West01}).  The goal is to find a permutation of the nodes
that minimizes the total cost of the network installation. The cost of
installing node $v_t \in V$ under a permutation $\sigma$ of $V$ is given by
$$f(r(v_t, G, \sigma))\,,$$ where $r(v_t, G, \sigma)$ is the number of nodes
adjacent to $v_t$ in $G$ that appear before $v_t$ in the permutation $\sigma$.
The total cost of installing $G$ according to $\sigma$ is given by
\begin{equation} C_G(\sigma) = \sum_{t=1}^{n} f(r(v_t, G, \sigma)).
\label{eq:general-NANIP} \end{equation}

The problem is illustrated in Fig.~\ref{fig:illustration}.  Generally, the
choice of $f$ depends on the application, and $f$ will often be convex
decreasing. 

\begin{figure}[th]
\begin{subfigure}[b]{0.4\textwidth}
\centering
\includegraphics[width=0.5\textwidth]{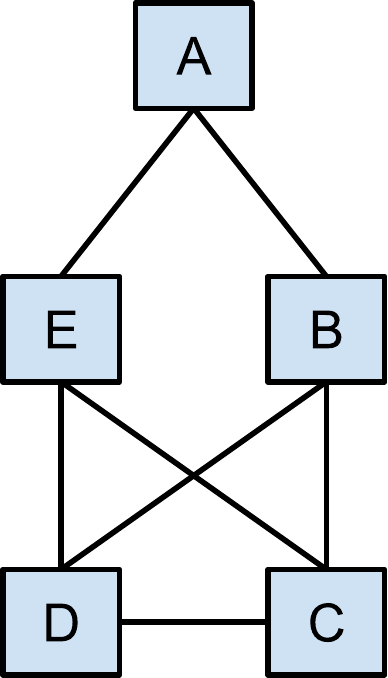}
\caption{Simple NANIP instance}
\end{subfigure}%
\begin{subfigure}[b]{0.6\textwidth}
\centering
\includegraphics[width=1.0\textwidth]{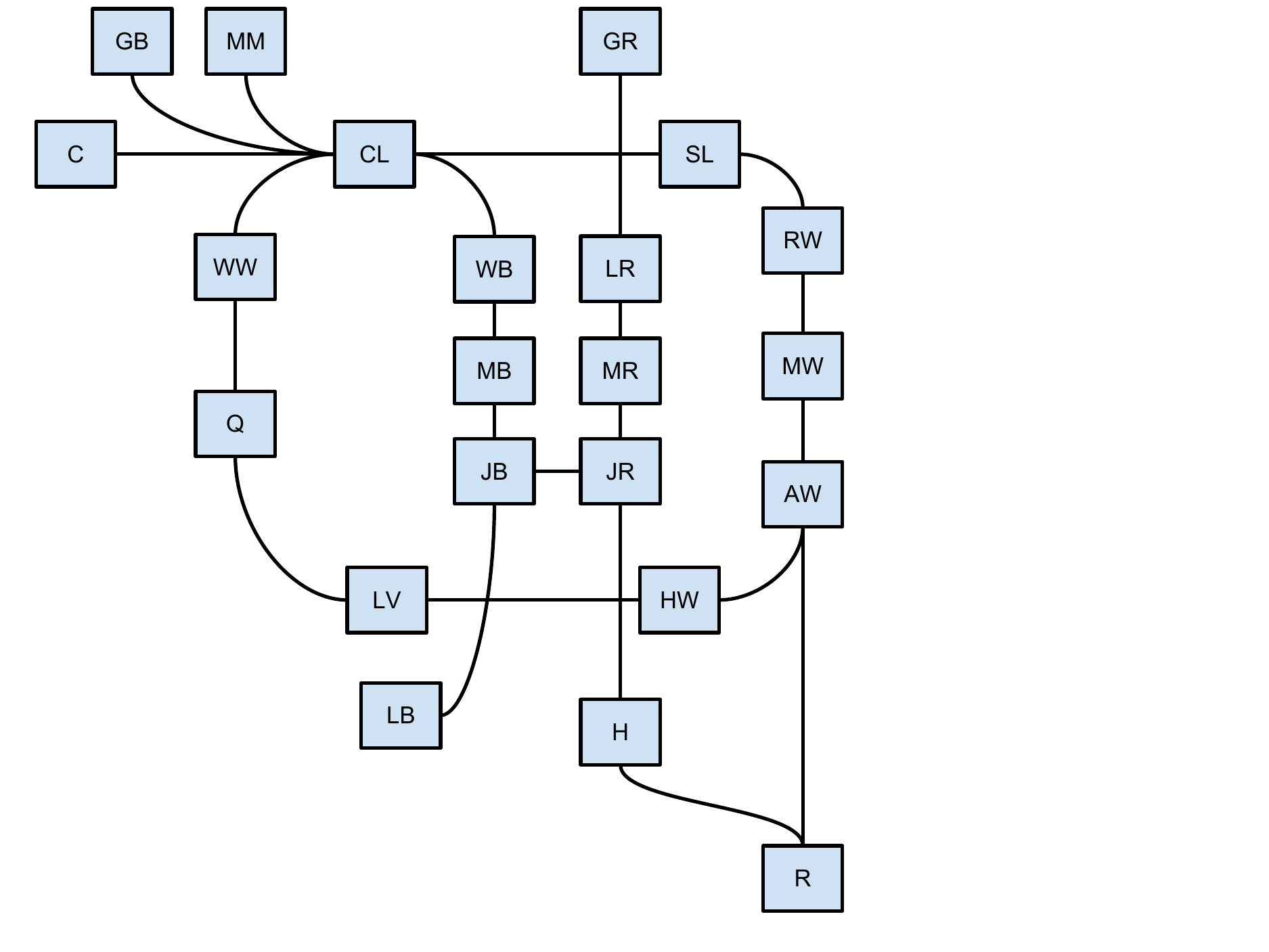} 
\caption{Central Chicago ``Loop''}
\end{subfigure}
\caption{Illustrations of NANIP.
(a) Simple instance.  
When $f(2)=2$, $f(1)=1$ and $f(k\geq2)=0$, the naive installation sequence $\sigma=(A,B,C,D,E)$
gives cost of $4=2+1+1+0+0$, but all optimal solutions have cost $3$.\label{fig:illustration}
(b) Actual metro stations and their connections in downtown Chicago ``Loop''.  
With the same $f$, any optimal sequences must recover Clark/Lake (CL) station before at least one of its neighbors.}
\end{figure}

We assume that $G$ is connected and undirected, unless we note otherwise. If
$G$ has multiple connected components, NANIP could be solved on each component
independently without affecting the total cost.

We begin by quoting a preliminary lemma from \cite{Gutfraind14} which
establishes that all the arguments used in calculating the node costs must sum
to $m$, the number of edges in the network.

\begin{lem}[\cite{Gutfraind14}]
\label{lem:edge-decomp}
For any network $G$, and any permutation $\sigma$ of the nodes of $G$, 
\begin{equation}
\sum_{t=1}^n r(v_t,G,\sigma) = m \label{eq:edge-decomp}\,.
\end{equation}
\end{lem}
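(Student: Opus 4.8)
The plan is to prove this by a double-counting (or edge-charging) argument: I will show that every edge of $G$ contributes exactly $1$ to the left-hand sum, and then summing over all $m$ edges yields the claim. Concretely, I would rewrite the left-hand side by unrolling the definition of $r$. If I index the vertices by their position in $\sigma$, writing $v_1, v_2, \dots, v_n$ for the order in which they are installed, then $r(v_t, G, \sigma) = \#\{\, s : s < t \text{ and } \{v_s, v_t\} \in E \,\}$. Hence
\begin{equation}
\sum_{t=1}^{n} r(v_t,G,\sigma) = \sum_{t=1}^{n} \#\{\, s < t : \{v_s, v_t\} \in E \,\} = \#\{\, (s,t) : s < t,\ \{v_s,v_t\} \in E \,\}.
\end{equation}

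Next I would observe that the map sending an ordered pair $(s,t)$ with $s<t$ and $\{v_s,v_t\}\in E$ to the (unordered) edge $\{v_s,v_t\}$ is a bijection onto $E$. It is well-defined and injective because for any edge $e = \{u,w\} \in E$ exactly one of $u,w$ precedes the other in $\sigma$, so $e$ has a unique preimage; it is surjective because every edge is recovered by taking $s<t$ to be the positions of its two endpoints. Therefore the cardinality on the right-hand side equals $|E| = m$, which is exactly \eqref{eq:edge-decomp}.

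There is essentially no hard step here — the only thing requiring a bit of care is keeping the indexing by \emph{position in $\sigma$} straight so that ``appears before'' is unambiguous, and noting that the endpoints of an edge are distinct and linearly ordered by $\sigma$ so that the ``exactly one precedes the other'' dichotomy is valid. Since this statement is quoted verbatim from \cite{Gutfraind14}, I would keep the write-up to the two or three lines above rather than belaboring it.
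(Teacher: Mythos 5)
Your double-counting argument is correct: each edge $\{u,w\}$ is charged exactly once, to whichever endpoint appears later in $\sigma$, so the sum counts $|E|=m$. The paper itself gives no proof (the lemma is imported verbatim from the cited reference), and your argument is the standard one for this identity, so there is nothing to compare beyond noting that your write-up is complete and careful about the position-indexing point.
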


One application of this lemma is the case of a linear cost function
$f(k)=ak+b$, for some real numbers $a$ and $b$.  With such a function the
optimization problem is trivial in that all installation permutations have the
same cost.

In the next section we will prove hardness results about NANIP; let us recall
some relevant definitions.

\begin{defn}

An optimization problem is called \emph{strongly NP-hard} if it is NP-hard and
the optimal value is a positive integer bounded by a polynomial of the input
size.

\end{defn}

\begin{defn}

An algorithm is an \emph{efficient polynomial time approximation scheme
(EPTAS)} for an optimization problem if, given a problem instance and an
approximation factor $\varepsilon$, it runs in time $O(F(\varepsilon) n^c)$
for some constant $c$ and some function $F$ and finds a solution whose
objective value is within an $\varepsilon$ fraction of the optimum. An EPTAS is
called a \emph{fully polynomial time approximation scheme (FPTAS)} it runs in
polynomial in the size of the problem instance and $\frac{1}{\varepsilon}$.  

\end{defn}

A strongly NP-hard optimization problem cannot have an FPTAS unless
P=NP: otherwise, if $n$ denotes the input size and $p$ denotes the polynomial
such that the optimum value is bounded by $p(n)$, setting
$\varepsilon=\frac{1}{2p(n)}$ for the FPTAS would yield an exact polynomial
time algorithm.

Some NP-hard problems become efficiently solvable if a natural parameter is
fixed to some constant. Such problems are called fixed parameter tractable.

\begin{defn}
FPT, the set of \emph{fixed parameter tractable} problems, is the set
of languages $L$ of the form $\langle x,k\rangle$
such that there is an algorithm running in time $O(F(k) n^c)$ for some function
$F$ and constant $c$ deciding whether $\langle x,k\rangle\in L$.
\end{defn}

An example of a fixed parameter tractable problem is the vertex cover problem
(where the parameter is the size of the vertex cover). 
Problems believed to be fixed parameter intractable include the graph coloring problem
(the parameter being the number of colors) and the clique problem (with the size of 
the clique as parameter).

For parametrized languages, there is a natural fixed parameter tractable analogue of polynomial time reductions.
These so-called \emph{fpt-reductions} are used to define hardness for classes of parametrized languages,
similarly to how NP-hardness is defined using polynomial time reductions.
One important class of parametrized languages is $W[1]$. For the definition of $W[1]$ and for more background
on parametrized complexity, we refer the reader to the monograph of Downey and Fellows~\cite{DowneyF13}.
They proved that under standard complexity-theoretic assumptions, $W[1]$ is a strict superset of
$FPT$; consequently, $W[1]$-hard problems are fixed parameter intractable.
We will use this fact to show the fixed parameter intractability of NANIP.

\section{Convex decreasing NANIP is NP-hard} \label{sec:computation} We now
consider the hardness of solving NANIP with convex decreasing cost functions.

\begin{thm} \label{thm:np-hard} The Neighbor Aided Network Installation Problem
is strongly NP-hard when $f$ is convex decreasing; as a consequence it admits
no FPTAS. \end{thm}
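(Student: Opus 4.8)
The theorem bundles two assertions: strong NP-hardness of NANIP with convex decreasing $f$, and (hence) the non-existence of an FPTAS, the second being immediate from the first by the padding argument stated just above the theorem. So the plan is to give a polynomial-time reduction from a classical NP-hard problem that (i) uses a convex decreasing cost function and (ii) yields instances whose optimum is a positive integer bounded polynomially in the input size. I would reduce from the \textsc{Clique} problem: given a graph $H$ and an integer $K$, decide whether $\omega(H)\ge K$. The reduction: let $G$ be obtained from $H$ by adding $K$ pairwise non-adjacent ``apex'' vertices $u_1,\dots,u_K$, each joined to every vertex of $H$; and let $f(j)=\bigl(\max(0,K-j)\bigr)^2$. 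This $f$ is nonnegative, non-increasing, convex (its successive differences $2K-1,2K-3,\dots,3,1,0,0,\dots$ are non-increasing), and strictly decreasing on $\{0,1,\dots,K-1\}$; moreover $G$ is connected, the construction is polynomial, and $f$ takes integer values at most $K^2$.

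The engine of the proof is a ``prefix bound''. For any install order of $G$ and any $\ell\le n$, the back-degree of the vertex in position $p$ is at most $p-1$, so at least $i$ of the first $\ell$ back-degrees are $\le i-1$ (namely those in positions $1,\dots,i$); hence the $i$-th smallest of them, say $b_{(i)}$, satisfies $b_{(i)}\le i-1$, and since $f$ is non-increasing the first $\ell$ vertices together cost at least $\sum_{i=0}^{\ell-1}f(i)$. When $f$ is strictly decreasing on $\{0,\dots,\ell-1\}$, equality forces $b_{(i)}=i-1$ for all $i$, so the first $\ell$ vertices span $\binom{\ell}{2}$ edges and hence induce $K_\ell$. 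Taking $\ell=K$ and using $f\ge 0$ for the remaining positions, every install order of $G$ costs at least $B:=\sum_{i=1}^{K}i^2$. If $\omega(H)\ge K$, pick a $K$-clique $C$ of $H$ and install $C$ first (cost exactly $B$), then all apexes (each sees all of $C$, so $r\ge K$ and $f=0$), then $V(H)\setminus C$ (each sees all $K$ apexes, so $r\ge K$ and $f=0$); this meets the bound, so $\mathrm{OPT}=B$. For the converse I would show $\mathrm{OPT}=B$ forces $\omega(H)\ge K$: equality at the $K$-prefix makes the first $K$ vertices induce a $K$-clique of $G$, which is either a $K$-clique already inside $H$, or one apex together with a $(K-1)$-clique $C'$ of $H$; in the latter case the zero-cost requirement on every later vertex is invoked, since each remaining apex needs $K$ already-installed $H$-vertices, forcing some $x\in V(H)\setminus C'$ to be installed while only $C'$ (and one apex) lies among its possible predecessors, whence $x$ is adjacent to all of $C'$ and $C'\cup\{x\}$ is a $K$-clique of $H$.

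Putting this together, $\mathrm{OPT}(G,f)=B$ iff $\omega(H)\ge K$, so NANIP with convex decreasing costs is NP-hard; since $\mathrm{OPT}$ is always a positive integer bounded by $K^2\cdot|V(G)|$, it is strongly NP-hard, and the FPTAS impossibility follows. I expect the real obstacle to be the converse direction, specifically the second case: ruling out that an install order whose $K$-prefix is an apex plus a near-clique could still reach cost $B$ by exploiting the freedom to install vertices non-contiguously — the introduction already flags that such disconnected traversals can save an $\Omega(\log n)$ factor, so the argument has to pin down exactly where any such order must overspend. Convexity of $f$ is precisely what makes this work: it is what forces a cost-$\sum_{i=0}^{K-1}f(i)$ prefix to be a genuine clique, leaving no slack to be recouped downstream; a merely linear $f$ would make every order equivalent (as the text observes), and a ``flat-then-dropping'' profile is never convex, so a strictly-decreasing-then-flat $f$ like the one above is the natural choice.
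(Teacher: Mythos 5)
Your reduction is essentially the paper's: both reduce from CLIQUE by adjoining $K$ pairwise non-adjacent apex vertices joined to all of $H$, choose a convex cost that is strictly decreasing on $\{0,\dots,K-1\}$ and zero thereafter, and observe that a length-$K$ prefix of cost $\sum_{i=0}^{K-1}f(i)$ must induce a clique (your choice $f(j)=\left(\max(0,K-j)\right)^2$ versus the paper's $\max(0,K-j)$ is immaterial; both are convex decreasing). The one place you genuinely diverge is the converse direction, and your version is cleaner. The paper only concludes that, when one apex sneaks into the prefix, the remaining $K-1$ prefix vertices form a $(K-1)$-clique of $H$, and it then bridges the $K$-versus-$(K-1)$ discrepancy by invoking the polynomial-factor inapproximability of CLIQUE. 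You instead push one position further: the vertex in position $K+1$ must cost zero, cannot be a second apex (its back-degree would be only $K-1$), and must therefore be an $H$-vertex adjacent to all of $C'$ and the apex, recovering a genuine $K$-clique of $H$. This turns the construction into a plain Karp reduction from the CLIQUE decision problem ($\mathrm{OPT}=B$ if and only if $\omega(H)\ge K$), needing nothing beyond NP-completeness of CLIQUE, whereas the paper leans on a much heavier hardness-of-approximation theorem for what is, in your version, an elementary counting step. Your prefix bound via the sorted back-degrees $b_{(i)}\le i-1$ also makes explicit a lower-bound claim the paper asserts without proof, and the strong NP-hardness conclusion (positive integral optimum, polynomially bounded) is handled the same way in both.
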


\begin{proof} 

We reduce from CLIQUE, that is, the problem of deciding given a graph $G =
(V,E)$ whether it contains as an induced subgraph the complete graph on $k$
vertices.  Given a graph $G = (V,E)$ with $n=|V|$ and an integer $k$, we
construct an instance of NANIP on a graph $G'$ with a convex cost function
$f(i)$ as follows.  Define $G'$ by adding $k$ new vertices $u_1, \dots, u_k$ to
$G$ which are made adjacent to every vertex in $V$ but not to each other, establishing
an independent set of size $k$.
Define the cost function 

\[
 f(i) = f_k(i) =
  \begin{cases} 
      \hfill k-i    \hfill & \text{ if $i \leq k$} \\
      \hfill 0          \hfill & \text{ otherwise} \\
  \end{cases}
\]

Let $M = \sum_{i=0}^{k} f(i)=\frac{k(k+1)}2$.  In a traversal $\sigma$ whose
first $k$ vertices yield cost $M$, every new vertex must be adjacent to every
previously visited vertex, i.e. the vertices form a $k$-clique.  Moreover, $M$
is the lower bound on the cost incurred by the first $k$ vertices of any
traversal of $G'$. 

Suppose that $G$ has a clique of size $k$, and denote by $v_1, \dots, v_k$ the
vertices of the clique, with $v_{k+1}, \dots, v_n$ the remaining vertices of
$G$. Then the following ordering is a traversal of $G'$ of cost exactly $M$:
\[
   v_1, \dots, v_k, u_1, \dots, u_k, v_{k+1}, \dots, v_n \,.
\]

Conversely, let $w_1, \dots, w_{n+k}$ be an ordering of the vertices of $G'$
achieving cost $M$.  Then by the above, the vertices $w_1, \dots, w_{k}$ must
form a $k$-clique in $G'$.  In the case these $k$ prefix vertices are all
vertices of $G$ we are done.  Otherwise, the independence of the $u_i$'s
implies that at most one $u_i$ is used in $w_1, \dots, w_{k+1}$; using more
would incur a total cost greater than $M$.  In this case the $k-1$ remaining
vertices of the prefix form a $(k-1)$-clique of $G$.
Since it is NP-hard to approximate CLIQUE within a polynomial factor
\cite{Zuckerman06}, this proves the NP-hardness of convex decreasing NANIP.

Moreover, since the optimum value of a NANIP instance obtained by this reduction
is at most $k^2$ which is upper bounded by $n^2$, the size of the NANIP instance,
it also follows that convex decreasing NANIP is strongly NP-hard and therefore
does not admit an FPTAS. 

\end{proof}

The cost function $f_k(i)$ used in the proof of Theorem~\ref{thm:np-hard} is
parametrized by $k$.  Call $\textup{NANIP}_k$ the subproblem of NANIP with cost
functions of finite support where the size of the support is $k$. Because we
consider $\textup{NANIP}_k$ a subproblem of general NANIP, stronger
parametrized hardness results for the former give insights about the latter.
Indeed, the following corollary is immediate.

\begin{cor}
$\textup{NANIP}_k$ is $W[1]$-hard.
\end{cor}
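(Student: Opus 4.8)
The plan is to recognize that the NP-hardness reduction from the proof of Theorem~\ref{thm:np-hard} is already, essentially verbatim, an fpt-reduction witnessing $W[1]$-hardness of $\textup{NANIP}_k$. Recall that CLIQUE parametrized by the clique size $k$ is $W[1]$-hard (in fact $W[1]$-complete); this is a cornerstone of parametrized complexity theory, proved in Downey and Fellows~\cite{DowneyF13}. So it suffices to check that the map $(G,k)\mapsto(G',f_k,M)$ of Theorem~\ref{thm:np-hard}, with $M=\frac{k(k+1)}{2}$, is an fpt-reduction from $k$-CLIQUE to the decision version of $\textup{NANIP}_k$, i.e.\ the question ``given $(G',f,M)$ with $|\mathrm{supp}(f)|=k$, is there a permutation of cost at most $M$?'', parametrized by $k$.

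There are three things to verify, all routine. First, computability in polynomial time: $G'$ has $n+k\le 2n$ vertices and $f_k$ is specified by the single integer $k$, so the instance is built in polynomial time in $|G|$. Second, the parameter is preserved (hence trivially bounded by a computable function of $k$): the support of $f_k$ is exactly $\{0,1,\dots,k-1\}$, of size $k$. Third, correctness: $G$ has a $k$-clique iff $G'$ admits a traversal of cost at most $M$. The ``only if'' direction is the explicit cost-$M$ traversal already exhibited in the proof. For ``if'', since $M$ lower-bounds both the total cost of any traversal and the cost of its first $k$ vertices, a traversal of cost at most $M$ has cost exactly $M$, its first $k$ vertices form a clique of $G'$, and every later vertex has cost $0$. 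If that prefix clique avoids the auxiliary vertices $u_i$, it is a $k$-clique of $G$; if it contains some $u_i$ --- it cannot contain two, since the $u_i$ are mutually nonadjacent --- then $w_{k+1}$ has cost $0$, hence is adjacent to all $k$ prefix vertices, hence is not a $u_j$, and together with the $k-1$ prefix vertices lying in $G$ it forms a $k$-clique of $G$.

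I do not expect a genuine obstacle, since the substantive work was already done for Theorem~\ref{thm:np-hard}. The only point that needs a little care is the last case of the correctness argument: to extract a true $k$-clique (rather than a $(k-1)$-clique plus an inapproximability gap, which is all the NP-hardness proof needed) one has to use that $w_{k+1}$ is forced to have cost $0$ once the cost-$M$ budget is exhausted by the prefix. The rest is bookkeeping --- observing that the same integer $k$ serves as the clique-size parameter and as the support-size parameter, and that the object for which $W[1]$-hardness is claimed is the parametrized decision version of $\textup{NANIP}_k$, since NANIP itself was stated as an optimization problem.
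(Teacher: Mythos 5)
Your proof takes essentially the same approach as the paper: the paper's proof of this corollary simply observes that CLIQUE is $W[1]$-complete and that the reduction from Theorem~\ref{thm:np-hard} is an fpt-reduction, which is exactly what you verify in more detail. Your extra care in the case where a $u_i$ lands in the prefix --- using that $w_{k+1}$ must have cost $0$ to recover a genuine $k$-clique of $G$ rather than only a $(k-1)$-clique --- is a welcome tightening, since the paper's own write-up of the underlying reduction leaves that case resolved only via the inapproximability of CLIQUE, which is not what one wants for an exact parametrized reduction.
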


\begin{proof} 

CLIQUE is $W[1]$-complete when parametrized by the size of the clique.
$W[1]$-hardness is preserved by so-called $fpt$-reductions (see
\cite{DowneyF13}), and the reduction from the proof of
Theorem~\ref{thm:np-hard} is such a reduction. 

\end{proof}

In particular, standard complexity assumptions imply from this that
$\textup{NANIP}_k$ is not fixed-parameter tractable and has no efficient
polynomial-time approximation scheme (EPTAS). Now we will show that the same
reduction can be used to obtain a stronger approximation lower bound of $(1 +
n^{-c})$ for all $c > 0$. First a lemma.

\begin{lemma}
Let $G'$ and $f$ constructed as above, and let $\sigma$ denote a NANIP
traversal.  Suppose $V$ denote the vertices of $G$ and $U$ denote the vertices
of the independent set.  If $\sigma'$ is obtained from $\sigma$ by moving the
$U$ to positions $k+1,\ldots,2k$ (without changing the precedence relations of the vertices
in $V$), then $C_{G'}(\sigma')\le C_{G'}(\sigma)$.
\end{lemma}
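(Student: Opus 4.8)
The plan is to split the total cost into the contribution of the independent-set vertices $U$ and the contribution of the graph vertices $V$, and to exploit the very special structure of $f$: it is non‑increasing, it is identically $0$ on arguments $\ge k$, and it drops by at most $1$ when its argument increases by $1$ (indeed $f(i)-f(i+1)\in\{0,1\}$ for every $i$). Throughout I assume $n\ge k$, since otherwise positions $1,\dots,2k$ are not even available (and in the reduction the underlying \textup{CLIQUE} instance is then trivially negative).

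First I would evaluate $C_{G'}(\sigma')$ exactly. In $\sigma'$ the first $k$ slots are occupied by the first $k$ vertices of $V$ in the order they appear in $\sigma$ — call this set $V_1$ — while $U$ sits in slots $k+1,\dots,2k$ and the remaining $n-k$ vertices of $V$, call this set $V_2$, come afterwards. Every $u\in U$ is adjacent in $G'$ to all of $V$ and to none of $U$, so exactly the $k$ vertices of $V_1$ precede it; hence $r(u,G',\sigma')=k$ and $f(r(u,G',\sigma'))=0$. Every $v\in V_2$ is preceded by all $k$ vertices of $U$, so $r(v,G',\sigma')\ge k$ and its cost is again $0$. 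Finally, for $v\in V_1$ no $U$-vertex precedes $v$, and since $\sigma'$ preserves the precedence order of $V$, the number of $G$-neighbors of $v$ before $v$ is unchanged; writing $a(v):=r(v,G,\sigma)=r(v,G,\sigma')$, we obtain $C_{G'}(\sigma')=\sum_{v\in V_1}f(a(v))$.

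Next I would bound $C_{G'}(\sigma)$ from below by discarding the (non-negative) contribution of $V_2$, keeping only $\sum_{v\in V_1}f(r(v,G',\sigma))+\sum_{u\in U}f(r(u,G',\sigma))$. For $v\in V_1$ write $b(v)$ for the number of $U$-vertices preceding $v$ in $\sigma$, so that $r(v,G',\sigma)=a(v)+b(v)$; for $u\in U$ write $p(u)$ for the number of $V$-vertices preceding $u$ in $\sigma$, so that $r(u,G',\sigma)=p(u)$ and $f(r(u,G',\sigma))=(k-p(u))^+$. Comparing with the value of $C_{G'}(\sigma')$ above, it suffices to prove
\[
\sum_{v\in V_1}\bigl(f(a(v))-f(a(v)+b(v))\bigr)\ \le\ \sum_{u\in U}(k-p(u))^+ .
\]
The left-hand side is at most $\sum_{v\in V_1}b(v)$ because $f$ decreases by at most $1$ per unit increase of its argument. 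For the right-hand side I would use a double-counting identity: since $V_1$ consists of the first $k$ vertices of $V$, a vertex $u\in U$ with $p(u)=j$ precedes exactly $(k-j)^+$ vertices of $V_1$, so summing $(k-p(u))^+$ over $u\in U$ counts the pairs $(u,v)$ with $u\in U$, $v\in V_1$ and $u$ before $v$ in $\sigma$; summing the same pairs over $v\in V_1$ instead gives $\sum_{v\in V_1}b(v)$. Hence both sides of the displayed inequality are in fact sandwiched by $\sum_{v\in V_1}b(v)$, the inequality holds (with room to spare), and $C_{G'}(\sigma')\le C_{G'}(\sigma)$ follows.

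The arithmetic of $f$ and the bookkeeping of which slots hold which vertices are routine; the crux — and the one step that needs care — is recognizing that the potential cost increase incurred at the $V_1$-vertices when their $U$-predecessors are stripped away is paid for \emph{exactly} by the cost those same $U$-vertices contribute in $\sigma$, which is precisely what the double-counting identity formalizes. The only other point to watch is that relocating $U$ never changes $r(\cdot,G,\cdot)$ for vertices of $V$, so that $a(v)$ is genuinely the same quantity in $\sigma$ and $\sigma'$; this is guaranteed by the hypothesis that the rearrangement preserves the precedence relations among the vertices of $V$.
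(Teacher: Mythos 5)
Your proof is correct, and it takes a genuinely different route from the paper's. The paper argues by a sequence of local moves: it locates the positions $i_1<\dots<i_k$ of the $U$-vertices and applies cyclic permutations one at a time, splitting into the case $i_1>k$ (slide each $u_s$ left into slot $k+s$; every displaced vertex gains a visited neighbor, so no cost increases) and the case $i_1\le k$ (swap the first $U$-vertex past the next $V$-vertex, trading a cost increase of at most $j-i_1$ at that $V$-vertex against an exact decrease of $j-i_1$ over the intervening $U$-vertices, and iterate until Case~1 applies). Your argument instead compares $\sigma$ and $\sigma'$ globally in one shot: you compute $C_{G'}(\sigma')=\sum_{v\in V_1}f(a(v))$ exactly, discard the nonnegative $V_2$ contribution from $C_{G'}(\sigma)$, and reduce the lemma to the inequality $\sum_{v\in V_1}\bigl(f(a(v))-f(a(v)+b(v))\bigr)\le\sum_{u\in U}(k-p(u))^+$, which follows from the $1$-Lipschitz property of $f$ on the left and the double-counting identity $\sum_{u\in U}(k-p(u))^+=\sum_{v\in V_1}b(v)$ on the right. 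Both proofs ultimately rest on the same two facts about $f$ (it drops by exactly $1$ per unit below $k$ and vanishes at and above $k$), but your version avoids the iterative bookkeeping --- in particular the paper's somewhat informal ``repeatedly apply $\xi$ until $i_1=k+1$'' step --- and makes the exchange of cost between $U$ and $V_1$ explicit as a single combinatorial identity, at the price of being tied to the specific form of $f$ rather than suggesting a move-by-move monotonicity statement. One small point worth stating explicitly rather than assuming: the standing hypothesis $n\ge k$ is automatic in the reduction (otherwise $G$ cannot contain a $k$-clique and the instance is trivial), so it costs you no generality.
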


\begin{proof}
Consider the positions in $\sigma$ of the first $k$ vertices from $G$, and let
$i_1, \dots, i_k$ be the positions of the vertices from $U$. Call $u_1 =
\sigma(i_1), \dots, u_k = \sigma(i_k)$.

\textbf{Case 1:} $i_1 > k$. In this case all the $u_i$ are free, as are all
vertices visited after $\sigma(k)$. If $i_1 > k+1$, apply the cyclic permutation
$\gamma_1 = (k+1, k+2, \dots, i_1)$ to move $u_1$ to position $k+1$. The cost
of visiting $u_1$ is still zero, and the cost of the other manipulated vertices
does not increase because they each gain one previously visited neighbor. Now
repeat this manipulation with $\gamma_s = (k+s, k+s+1, \dots, i_s)$ for $s = 2,
\dots, k$. An identical argument shows the cost never increases, and at the end
we have precisely $\sigma'$. 

\textbf{Case 2:} $i_1 \leq k$. In this case $u_1$ is not free. Let $j$ be the
index of the first $v \in V$ that occurs after $i_1$. Then apply the
cyclic permutation $\xi = (i_1, i_1 + 1, \dots, j)$ to move $v$ before $u_1$.
The cost of $v$ increases by at most $j - i_1$ (and this is not tight since it
is possible that $j > k+1$). But since all $\sigma(i_1), \sigma(i_1 + 1), \dots,
\sigma(j-1) \in U$, and they each gain a neighbor as a result of applying
$\xi$, so their total cost decreases by exactly $j - i_1$, and the total cost
of $\sigma$ does not increase. Now repeatedly apply $\xi$ (using the new values
of $i_1, j$) until $i_1 = k+1$. Then apply case 1 to finish.

\end{proof}

\begin{thm}
For all $c>0$, there is no efficient $(1+n^{-c})$-approximation
algorithm for NANIP on graphs with $n$ vertices with convex decreasing cost
functions, unless $\textup{P} = \textup{NP}$.  
\end{thm}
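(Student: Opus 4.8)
\noindent
The plan is to reuse the CLIQUE reduction from the proof of Theorem~\ref{thm:np-hard}, sharpen the gap it produces with the help of the preceding lemma, and then blow the gap up by taking many disjoint copies.

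First I would pin down the exact gap for the instance $(G',f_k)$ built there. By the proof of Theorem~\ref{thm:np-hard}, the first $k$ vertices of any traversal of $G'$ incur cost at least $M=\frac{k(k+1)}{2}$, and this is attained only if those $k$ vertices are pairwise adjacent. Combining this with the preceding lemma, an optimal traversal of $G'$ may be assumed to place the whole independent set $U$ in positions $k+1,\dots,2k$; then its first $k$ vertices all lie in $V$, so they reach cost $M$ exactly when they form a $k$-clique \emph{in $G$}. Since every node cost $f_k(i)$ is a non-negative integer, so is the optimum, and I would conclude: $\mathrm{OPT}(G')=M$ when $G$ has a $k$-clique, and $\mathrm{OPT}(G')\ge M+1$ otherwise.

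Next comes the amplification. Fix $c>0$ and set $\ell=\lceil n^{2/c}\rceil$, which is bounded by a fixed polynomial in $n$. Let $H$ be the disjoint union of $\ell$ copies of $G'$, each carrying the same convex decreasing cost function $f_k$. Because NANIP on a disconnected graph decomposes over its connected components (as noted in the preliminaries), $\mathrm{OPT}(H)=\ell\cdot\mathrm{OPT}(G')$, so $\mathrm{OPT}(H)=\ell M$ in the clique case and $\mathrm{OPT}(H)\ge\ell(M+1)$ otherwise. Writing $\nu=\ell(n+k)$ for the vertex count of $H$, we have $\nu\ge\ell\ge n^{2/c}$, hence $\nu^{c}\ge n^{2}>M$ (using $M=\frac{k(k+1)}{2}\le\frac{n(n+1)}{2}<n^{2}$ for nontrivial instances). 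Now if $A$ were an efficient $(1+\nu^{-c})$-approximation for convex decreasing NANIP, then on input $H$ it would return a traversal of cost at most $(1+\nu^{-c})\ell M=\ell M+\nu^{-c}\ell M<\ell M+\ell=\ell(M+1)$ when $G$ has a $k$-clique (since $\nu^{-c}M<1$), whereas when $G$ has no $k$-clique every traversal, in particular $A$'s output, costs at least $\mathrm{OPT}(H)\ge\ell(M+1)$. Thus testing the cost returned by $A$ against the threshold $\ell(M+1)$ decides CLIQUE in polynomial time, forcing $\mathrm{P}=\mathrm{NP}$.

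The step demanding the most care is the lower bound $\mathrm{OPT}(G')\ge M+1$ in the no-clique case: one must rule out that some clever interleaving of the vertices of $U$ into the first $k$ positions lets a traversal of $G'$ hit cost $M$ even when $G$ has no $k$-clique. This is precisely what the preceding lemma delivers — it evicts $U$ from the first $k$ slots without raising the cost — after which the claim collapses to the clean fact (already invoked in the proof of Theorem~\ref{thm:np-hard}) that $k$ vertices of $V$ realizing cost $M$ must be pairwise adjacent. Everything downstream is bookkeeping: one only needs $\ell=\lceil n^{2/c}\rceil$ to stay polynomial, which it does for each fixed $c$, and the strict inequality $M<\nu^{c}$, which holds for every nontrivial CLIQUE instance.
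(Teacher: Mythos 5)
Your proposal is correct in substance but takes a genuinely different route from the paper's. The paper works with a \emph{single} connected instance $G'$ and extracts the $(1+n^{-c})$ lower bound from the strong inapproximability of CLIQUE (distinguishing clique number $2^R$ from $2^{\delta R}$ \cite{Zuckerman06}) combined with Tur\'an's theorem, which converts a near-optimal cost of the first $k$ vertices --- equivalently, a high edge count among them --- into a large clique. You instead use only the exact NP-hardness of CLIQUE, sharpen the single-instance analysis to the clean dichotomy $\mathrm{OPT}(G')=M$ versus $\mathrm{OPT}(G')\ge M+1$, and then amplify the gap by taking $\ell=\lceil n^{2/c}\rceil$ disjoint copies. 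Your use of the displacement lemma plus integrality to get that dichotomy is sound (and it incidentally repairs the off-by-one ``$(k-1)$-clique'' issue left open in the proof of Theorem~\ref{thm:np-hard}), the identity $\mathrm{OPT}(H)=\ell\cdot\mathrm{OPT}(G')$ is justified by the componentwise decomposition of the cost, and the arithmetic ($\nu^{c}\ge n^{2}>M$, hence $(1+\nu^{-c})\ell M<\ell(M+1)$) checks out, with $\ell$ polynomial for each fixed $c$. Your argument is the more elementary of the two: it needs neither Tur\'an's theorem nor PCP-based clique inapproximability.

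The one caveat is connectivity. The paper's standing assumption is that NANIP instances are connected, and its hard instance $G'$ is connected; your amplified instance $H$ is a disjoint union of $\ell$ components. The paper's remark that disconnected instances decompose over components does not close this: an algorithm achieving ratio $1+\nu^{-c}$ on \emph{connected} $\nu$-vertex graphs, run per component, only guarantees $1+(n+k)^{-c}$ on each copy, and $(n+k)^{-c}M$ need not be below $1$ for small $c$, so hardness for disconnected inputs does not transfer back to the connected case. Strictly speaking, then, you rule out $(1+n^{-c})$-approximation for NANIP on general (possibly disconnected) graphs, a slightly weaker statement than what the paper proves. If the theorem is read under the paper's connectivity convention, you would need either to link the copies by a gadget and re-verify the gap, or to fall back on the paper's single-instance argument.
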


\begin{proof}
It is NP-hard to distinguish a clique number of at least $2^R$ from a clique
number of at most $2^{\delta R}$ in graphs on $2^{(1+\delta)R}$ vertices ($\delta>0$)
\cite{Zuckerman06}.  We will reduce this problem to finding an
$(1+n^{-c})$-approximation for NANIP.  In particular, we will show that there
is no efficient $C$-approximation approximation algorithm for NANIP, where $$ C
= \frac{k}{k+1} \left ( 1 + \frac{1}{k^{2\varepsilon}} \right ) $$ and
$k=n^{1/(1+\delta)}$.

This is equivalent to the statement of the theorem since by setting
$\eps=c/(2+2\delta)$, we get that there is no efficient
$\frac{n^{1+\delta}}{n^{1+\delta}+1}(1+n^{-c})<(1+n^{-c})$-approximation
algorithm for NANIP.

Let $G$ be a graph on $n=2^{(1+\delta)R}$ vertices containing a $k$-clique
where $k=n^{1/(1+\delta)}=2^R$ and construct $G'$ from $G$ by adding a
$k$-independent set as before, with $f(i)=\max(k-i, 0)$. Suppose we have an
efficient $C$-approximation algorithm for NANIP. After running it on input
$(G', k)$, modify the output sequence according to the previous lemma. Then all
the nodes after the first $k$ are free, thus the cost of the sequence is
determined by the first $k$ vertices. Since they all have fewer than $k$
preceding neighbors, the cost function for them is linear, implying that the
total cost of the sequence depends only on the number of edges in between the
first $k$ vertices.

The cost of the optimal NANIP sequence in $G'$ is $k(k+1)/2$, thus the cost of
the sequence returned by the approximation algorithm is at most

$$ \frac{k}{k+1}\left(1+\frac{1}{k^{2\eps}}\right)\cdot \frac{k(k+1)}{2} =
\frac12(k^2+k^{2-2\eps}).$$

Since $$\frac12(k^2+k^{2-2\eps})=(-1)(1-k^{-2\eps})\frac{k^2}{2}+k^2,$$ it
follows by \cite{Gutfraind14}, Corollary 2, that there are more than
$(1-k^{-2\eps})k^2/2$ edges between the first $k$ vertices.

Tur\'an's theorem~\cite{Turan1941} states that, a graph on $k$ vertices that
does not contain an $(r+1)$-clique can have at most $(1-\frac1r)k^2/2$ edges.
The contrapositive implies that the induced subgraph on the first $k$ vertices
of the NANIP sequence contains a $(k^{2\eps}-1)$-clique.  Since $k^{2\eps}-1>
2^{\eps R}$, this completes the proof.

\end{proof}

\section{Greedy analysis for convex NANIP}

In this section we discuss the approximation guarantees of the greedy algorithm
on convex NANIP. The greedy algorithm is defined to choose the
cheapest cost vertex at every step, breaking ties arbitrarily. A useful
observation here is that the greedy algorithm always produces a connected
traversal of a connected graph, in the sense that every prefix of the final
traversal induces a connected subgraph. We call an algorithm which always produces a
connected traversal a \emph{connected algorithm}.

Our next theorem shows a rather surprising result, that optimal recovery
sometimes requires disconnected solutions, even on convex cost functions.
Connected solutions can perform quite badly, having a cost that is a
$\Omega(\log n)$ multiple of the optimum.

\begin{thm}
Connected algorithms have an approximation ratio $\Omega(\log(n))$ for convex
NANIP problems.
\end{thm}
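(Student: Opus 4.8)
The plan is to exhibit an explicit family of graphs $G_n$ on $n$ vertices together with a convex decreasing cost function for which every connected traversal costs $\Omega(\log n)$ times the optimum. The natural candidate is a "star of paths" or, more cleanly, a graph built so that a disconnected traversal can collect many vertices at a favorable neighbor-count while a connected one is forced to pay the expensive $f(0)$ or $f(1)$ cost repeatedly. Concretely I would take a large star $K_{1,n-1}$, or better a "broom"/"spider": a central high-degree structure whose leaves can all be installed cheaply once the center is installed, but where connectivity forces the algorithm into a bad order. Actually the cleanest construction is likely the following: let the cost function be $f(0)=1$, $f(i)=0$ for $i\ge 1$ (this is convex decreasing), and let $G_n$ be a graph in which the optimum disconnected traversal pays $f(0)$ only a few times — say $O(1)$ or $O(n/\log n)$ times — whereas any connected traversal is forced to pay $f(0)$ a $\log n$ factor more often.

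With $f(0)=1$ and $f(i)=0$ for $i\ge1$, the cost of a traversal is exactly the number of vertices installed with no already-installed neighbor; call these the "seeds." A connected traversal has exactly one seed, so its cost is $1$ — which is too good. So this $f$ is the wrong choice for penalizing connected algorithms; instead I need a cost function and graph where a disconnected traversal can create \emph{several} components that then merge cheaply, and where the savings come from installing interior vertices of a component with $2$ (rather than $1$) installed neighbors. **Second plan, the one I'd actually pursue:** take $f(0)=1, f(1)=1, f(2)=0$, convex decreasing. Then the cost is the number of vertices installed with at most one prior neighbor. On a path $P_m$, a connected traversal installs every vertex with $\le 1$ prior neighbor (cost $m$), but a disconnected traversal that first places every third vertex as a seed and then fills in the gaps installs roughly $m/3$ seeds at cost $1$, then the gap vertices at cost $1$ (they have one neighbor), then... hmm, the middle vertex of a 3-gap gets installed with two neighbors, cost $0$. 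So path cost drops to about $2m/3$ — only a constant factor. To get a $\log n$ factor I should iterate this: build a graph where the "fill in the middle" trick compounds across $\log n$ levels, e.g. a complete binary tree or a hypercube-like recursive structure.

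**The construction I'd commit to:** Let $G_n$ be (roughly) a balanced binary tree of depth $d = \log_2 n$, or a union of $n/\log n$ disjoint paths each of length $\log n$ hung off a common apex — and choose $f$ supported on $\{0,1,\dots,k\}$ with $k \approx \log n$, $f(i) = k-i$, the same family as in Theorem~\ref{thm:np-hard}. The key steps are: (1) describe $G_n$ and $f$ precisely; (2) give an explicit disconnected traversal of cost $O(n)$ (or whatever the true optimum order of magnitude is) by processing vertices in order of a carefully chosen "layering" so that each vertex is installed with many prior neighbors; (3) prove a matching lower bound of $\Omega(n\log n)$ on every connected traversal, using the fact (analogous to Lemma~\ref{lem:edge-decomp}) that a connected prefix of size $t$ induces a connected subgraph, hence has at most $t-1$ edges, hence the $\sum r(v_t,\cdot,\cdot)$ over the first $t$ vertices is at most $t-1$ — so in a connected traversal, \emph{by convexity}, the costs are pushed toward the expensive end $f(0), f(1)$ and cannot be amortized; (4) combine to get ratio $\Omega(\log n)$. **The main obstacle** will be step (3): one must argue that connectivity genuinely forces a $\log n$ loss and not merely a constant, and this is where the recursive/hierarchical nature of $G_n$ has to be exploited — showing that at \emph{every scale} a connected traversal is locally forced into the sparse (cheap-edge-count, expensive-cost) regime, and that these losses across $\log n$ scales add up rather than overlapping. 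I would handle this by a potential/charging argument: assign to each vertex of $G_n$ a "level" equal to its depth, show that the prefix of any connected traversal, restricted to each level, is itself forced to be sparse by the tree structure, and sum a geometric-looking bound over the $\log n$ levels. A cleaner alternative, if the binary-tree analysis gets messy, is to use the hypercube $Q_d$ on $n=2^d$ vertices with $f(i)=\max(d-i,0)$: here the disconnected "install by Hamming weight" order gives every vertex at layer $j$ exactly $j$ prior neighbors for total cost $\sum_j \binom{d}{j}(d-j) = \Theta(n d /2)$... wait, that is $\Theta(n\log n)$, not small — so for the hypercube I'd instead want $f$ with bounded support $k=O(1)$ against a tree of depth $\log n$; I would settle the exact choice during write-up, but the binary tree with $f(i)=k-i$, $k=\Theta(\log n)$ is the version I expect to work, and the charging argument over tree levels is the part to get right.
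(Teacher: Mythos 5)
There is a genuine gap, and it is in the construction you commit to. With $f(i)=k-i$ and $k=\Theta(\log n)$ on a binary tree (or a spider of paths), every vertex has bounded degree, so the only arguments $f$ is ever evaluated at are $0,1,2,3$ --- a range on which your $f$ is exactly linear. By Lemma~\ref{lem:edge-decomp} the arguments sum to $m$ for \emph{every} permutation, so a linear cost function makes all traversals cost the same: your instance gives no separation whatsoever, let alone a $\log n$ one. Worse, the whole tree-based plan is doomed for a structural reason: on any tree $\sum_t r(v_t,G,\sigma)=n-1$, so for any fixed convex decreasing $f$ at least about half the vertices have at most one previously installed neighbor and pay at least $f(1)$, forcing $\mathrm{OPT}=\Omega(n\cdot f(1))$, while no traversal costs more than $n\cdot f(0)$. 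The ratio on trees is therefore $O(f(0)/f(1))=O(1)$. Your step (3) --- proving every connected traversal of a tree costs $\Omega(n\log n)$ against an $O(n)$ optimum --- is not a hard step you deferred; it is false.

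The idea you are missing is that the separation should come from driving the \emph{optimum} down to $O(1)$, not from driving the connected cost up to $\Omega(n\log n)$, and this requires adding edges so that $m\approx 2n$ and almost every vertex can be given two previously installed neighbors. The paper takes a depth-$m$ binary tree and attaches two apex vertices $u,v$ adjacent to \emph{all} $2^{m-1}$ leaves, with $f(0)=2$, $f(1)=1$, $f(\ge 2)=0$ (exactly the cost function you considered and abandoned). A disconnected traversal seeds $u$ and $v$ for cost $4$, after which every leaf has two installed neighbors and then every internal node has its two children installed, so the whole graph finishes for free: $\mathrm{OPT}=O(1)$. Gluing two such gadgets at a shared root, any connected traversal must enter the second copy through the root and descend a root-to-leaf path before it can touch that copy's apexes; each vertex on that path has only its parent installed when visited, paying $f(1)=1$ per level, for a total of $\Omega(m)=\Omega(\log n)$ against an optimum of $8$. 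Your instinct that the savings must ``compound across $\log n$ levels'' was steering you toward the wrong target; the compounding happens in the \emph{lower bound on the connected algorithm} (one unavoidable unit per level of the tree), while the optimum collapses to a constant thanks to the apex gadget.
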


\begin{proof}

We construct a particular instance for which a connected algorithm incurs cost
$\Omega(\log(n))$ while the optimal route has constant cost. Define the graph
$B(m)$ to be a complete binary tree $T$ with $m$ levels, and a pair of vertices
$u,v$ such that the leaves of $T$ and $\{u,v\}$ form the complete bipartite
graph $K_{2^{m-1}, 2}$. As an example, $B(3)$ is given in Figure~\ref{fig:b3}.

\begin{figure}[th]
\centering
\begin{subfigure}{.5\textwidth}
  \centering
  \scalebox{0.45}{\includegraphics{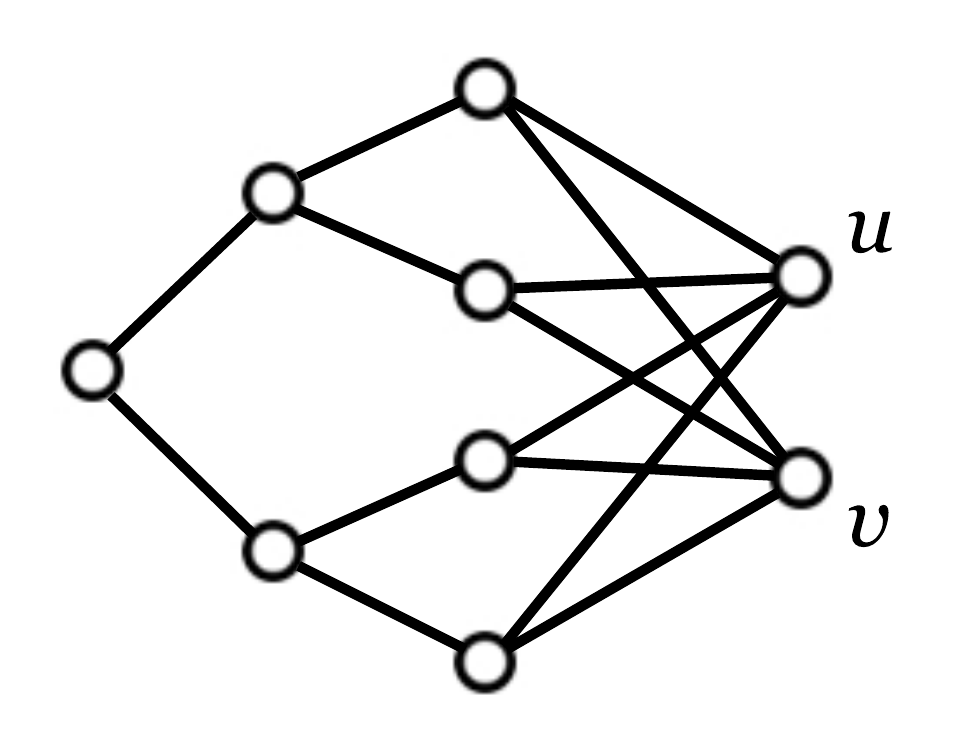}}
\end{subfigure}%
\begin{subfigure}{.5\textwidth}
  \centering
  \scalebox{0.40}{\includegraphics{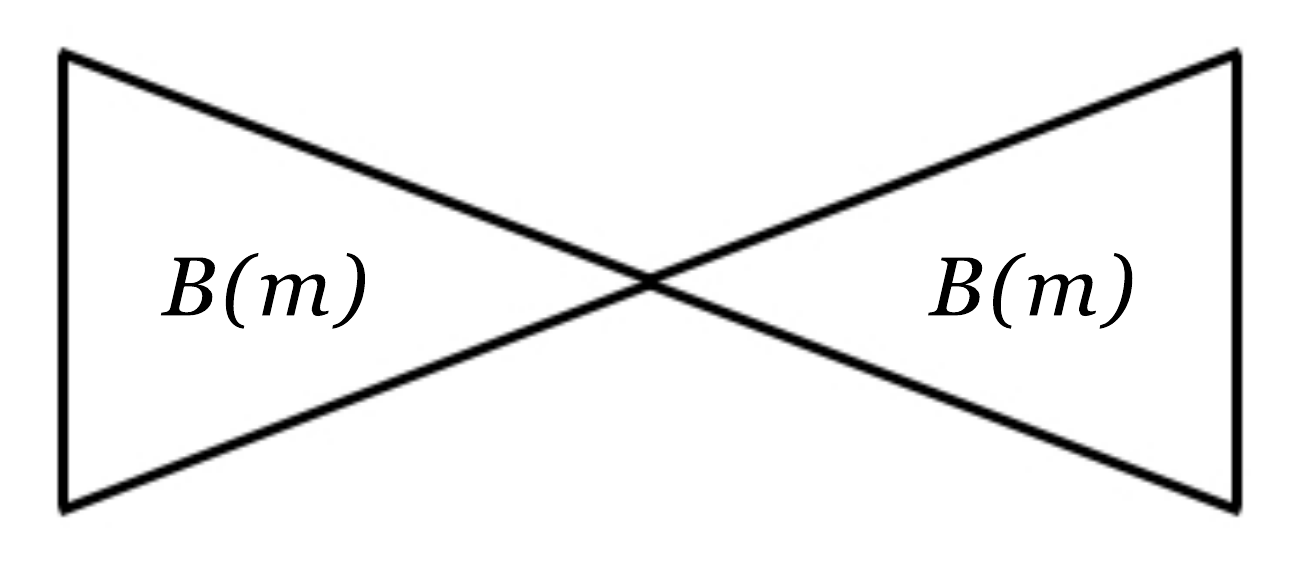}}
\end{subfigure}
\caption{Left: the graph $B(3)$; Right: two $B(m)$ pieced together to force a
connected algorithm to incur $\Omega(\log(n))$ cost.}
\label{fig:b3}
\end{figure}

Define the cost function $f(n)$ such that $f(0) = 2, f(1) = 1$, and $f(n) = 0$
for all $n \geq 2$. For this cost function it is clear that the minimum cost of
a traversal of $B(m)$ is exactly 4 by first choosing the two vertices of $B(m)$
that are not part of the tree, and then traversing the rest of the tree at zero
cost. However, if a connected algorithm were forced to start at the root of the
tree, it would incur cost $\Omega(m) = \Omega(\log(n))$ since every vertex
would have at most one visited neighbor. 

To force such an algorithm into this situation we glue two copies of $B(m)$
together so that their trees share a root. Then any connected ordering must
start in one of the two copies, and to visit the other copy it must pass
through the root, incurring a total cost of $\Omega(\log(n))$. On the other
hand, the optimal traversal has total cost 8. 

\end{proof}

Further, the greedy algorithm, which simply chooses the cheapest vertex at each
step and breaks ties arbitrarily, gives a $\Theta(n)$ approximation ratio in
the worst case. To see this, note that in the construction from the theorem the
only way a connected algorithm can achieve the logarithmic lower bound is by
traveling directly from the root to the leaves. But by breaking ties
arbitrarily, the greedy algorithm may visit every interior node in the tree
before reaching the leaves, thus incurring a linear cost overall.

\section{Integer programming for NANIP}\label{section:IP}

In this section we describe a new integer programming (IP) formulation of the
NANIP problem by adding in Miller-Tucker-Zemlin-type subtour elimination
constraints~\cite{miller1960integer}.  An IP, of course, does not give a
polynomial time algorithm, but can be sufficiently fast for some instances of
practical interest.  We then show that this formulation, experimentally,
improves on the previous formulation by \cite{Gutfraind14}.

\subsection{A new integer program}

In what follows we will assume that the cost function $f$ is a continuous
convex decreasing function $\mathbb{R}^{\geq 0} \to \mathbb{R}^{\geq 0}$ rather
than one $\mathbb{N} \to \mathbb{R}^{\geq 0}$. It is necessary to extend $f$ to
a continuous function for the LP relaxation to be well-defined.  While there
are many ways to do so, formulating the IP for a general continuous $f$
encapsulates all of them.

For an undirected graph $G = (V,E)$ on $n = |V|$ vertices, and introduce the
arc set $A$ by replacing each undirected edge with two directed arcs.  For all
$(i,j)\in A$ define variables $e_{ij} \in \{ 0,1 \}$.  The choice $e_{ij} = 1$
has the interpretation that $i$ is traversed before $j$ in a candidate ordering
of the vertices, or that one chooses the directed edges $(i,j)$ and discards
the other. In order to maintain consistency of the IP we impose the constraint
$e_{ij} = 1 - e_{ji}$ for all edges $(i,j)$ with $i < j$. Finally, we wish to
enforce that choosing values for the $e_{ij}$ corresponds to defining a partial
order on $V$ (i.e., that the subgraph of chosen edges forms a DAG). We use the
subtour elimination technique of Miller, Tucker, and Zemlin
\cite{miller1960integer} and introduce variables $u_i$ for $i = 1, \dots, n$
with the constraints

\begin{align}
\label{eq:dag-constraint}
\begin{matrix}
   u_i - u_j + 1 \leq n (1 - e_{ij}) & \forall (i,j) \in A \\ 
   0 \leq u_i \leq n & i = 1, \dots, n 
\end{matrix}
\end{align}
\noindent Thus, if $i$ is visited before $j$ then $u_i \geq u_j - 1$.
Now denote by $d_i = \sum_{(j,i) \in E} e_{ji}$, which is the number of
neighbors of $v_i$ visited before $v_i$ in a candidate ordering of $V$. The
objective function is the convex function $\sum_{i} f(d_i)$, and putting these
together we have the following convex integer program: 

\begin{figure}[th]
\begin{centering}
\begin{align*}
\textup{min }  & \sum_i f(d_i)                       & \\ 
\textup{s.t. } & d_i = \sum_{(j,i) \in A} e_{ji}     & i = 1, \dots, n \\ 
               & e_{ij} = 1 - e_{ji}                 & (i,j) \in A, i < j \\ 
               & u_i - u_j + 1 \leq n (1 - e_{ij})   & (i,j) \in A \\ 
               & 0 \leq u_i \leq n                   & i = 1, \dots, n \\ 
               & e_{ij} \in \{0,1\}                    & (i,j) \in A
\end{align*}
\end{centering}
\end{figure}

The integer program has a natural LP relaxation by replacing the integrality
constraints with $0 \leq e_{ij} \leq 1$. 
Because $f$ is only evaluated at integer points, it is possible to replace $f(d_i)$
with a real-valued variable bound by a set of linear inequalities, as detailed in~\cite{Gutfraind14}.

\subsection{Experimental results}

We compared the new IP formulation with the formulation of
\cite{Gutfraind14}, in the algebraic optimization framework
(IBM ILOG CPLEX 12.4 solver) running with a single thread on Intel(R) Core(TM)
i5 CPU U 520  @ 1.07GHz with 3.84E6 kB of random access memory.  The simulation
used graphs on 15 nodes, where the number of edges was increased from 14 (tree)
until the running time exceeded 1 hour.  For each edge density, we constructed
5 graphs and reported the average running time of the two algorithms.

From the computational experiments it is clear that our formulation gives
significant improvements.  For instance, the solve time seems to not depend on
the number of nodes in the graph (Fig.~\ref{fig:iptime}(a)), unlike in the
previous formulation.  We are also able to solve NANIP instances on 45 edges in
under an hour, whereas the previous formulation solved only 30 edge graphs in
that time (Fig.~\ref{fig:iptime}(b)).

\begin{figure}[th]
\begin{centering}
(a)\includegraphics[width=0.45\textwidth]{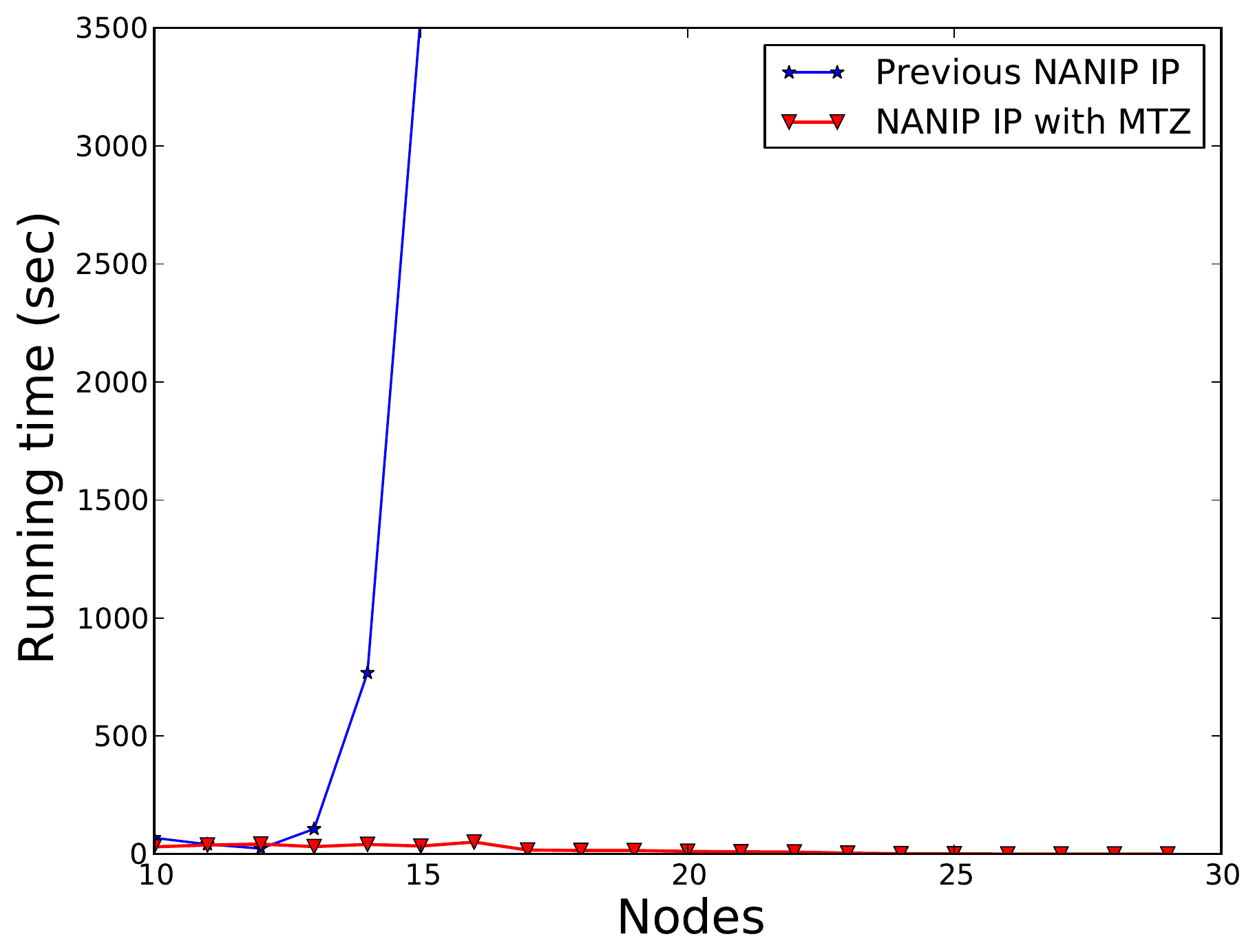} 
(b)\includegraphics[width=0.45\textwidth]{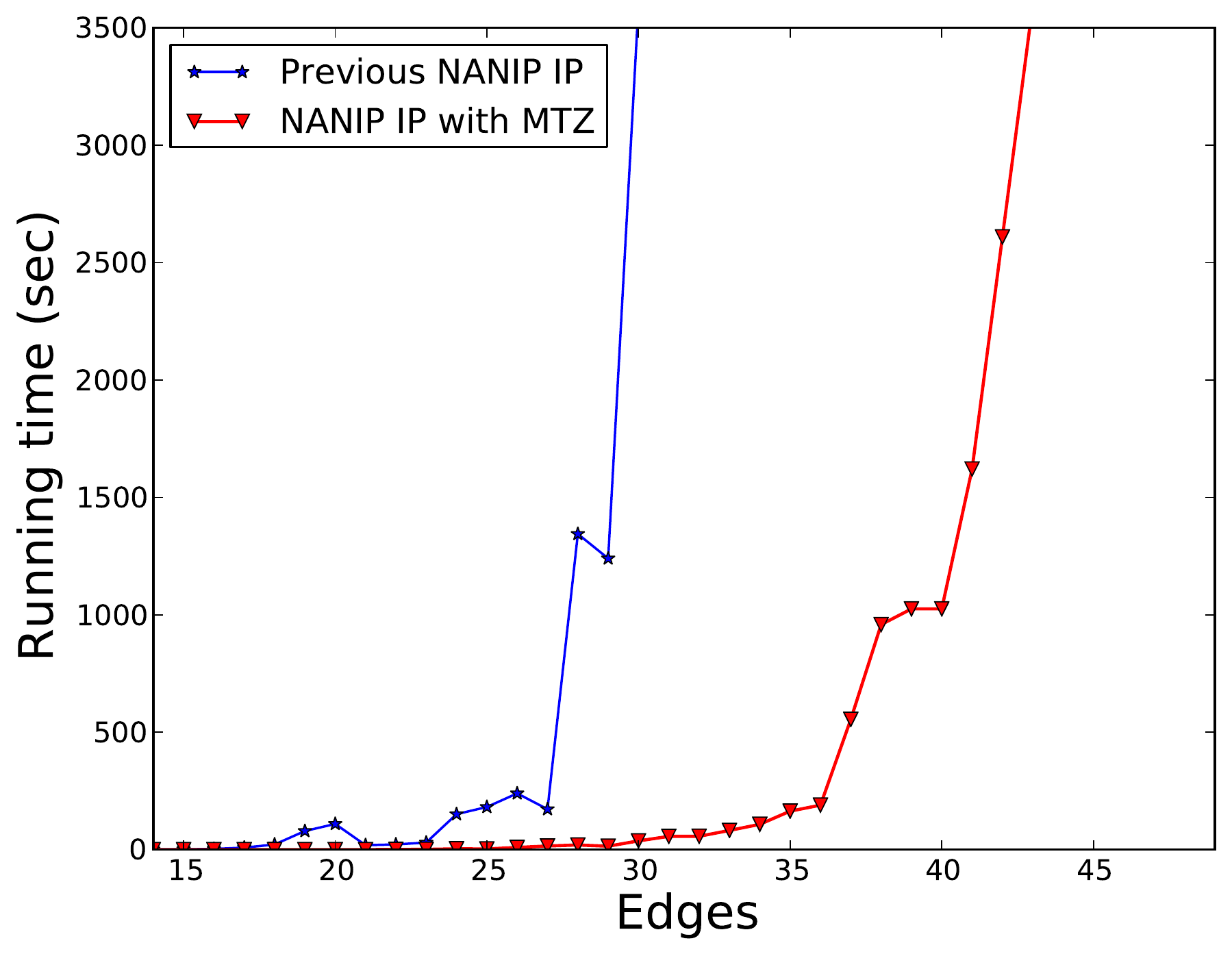} 
\par
\end{centering}
\caption{A comparison of the formulations in
\cite{Gutfraind14} and our new IP formulation with MTZ-type
constraints.  This graph plots running time vs.\ (a) number of nodes and, (b)
number of edges in the target graph.
In (a) the number of edges was kept at 30 throughout, while in (b) the number
of nodes was 15 throughout.\label{fig:iptime}} 
\end{figure}

\section{Conclusion} \label{sec:concl}
We analyzed the recently introduced Neighbor-Aided Network Installation
Problem.  We proved the NP-hardness of the problem for the practically most
relevant case of convex decreasing cost functions, addressing an open problem
raised in \cite{Gutfraind14}.  We then showed that the worst case approximation
ratio of the natural greedy algorithm is $\Theta(n)$.  We also gave a new IP
formulation for optimally solving NANIP, which outperforms previous
formulations.

The approximability of NANIP remains an open problem. In particular, it is
still not known whether an efficient $o(n)$ approximation algorithm exists for
general convex decreasing cost functions. One obstacle to finding a good
rounding algorithm is that the IP we presented has an infinite integrality gap.
As proof, the graph $K_n$ with the function $f(i) = \max(0, n/2 - i)$ has
$\textup{OPT} = \Omega(n^2)$ but the linear relaxation has $\textup{OPT}_{LP} =
0$. So an approximation algorithm via LP rounding would require a different IP
formulation.

\section*{Acknowledgments and Funding}
We thank our colleagues for insightful discussions.
AG was supported in part by an ORISE fellowship at the Food and Drug Administration.
CPLEX software was provided by IBM through the IBM Academic Initiative program.

\bibliographystyle{plain}
\bibliography{nanip}

\end{document}